\newcommand{\spm}{\mbox{$S\!P\!M$}}
\def\calP{\mathcal{P}}
\def\calF{\mathcal{F}}
\def\calS{\mathcal{S}}
\def\calU{\mathcal{U}}
\def\td{\tilde{d}}
\def\st{$s$-$t$}
\newtheorem{observation}{Observation}
\newtheorem{lemma}{Lemma}
\newtheorem{theorem}{Theorem}
\newenvironment{proof}{\noindent {\textbf{Proof:}}\rm}{\hfill $\Box$ \rm\bigskip}
\title{Shortest Paths Among Obstacles in the Plane Revisited\thanks{This research was supported in part by NSF under Grant CCF-2005323.}}
\author{
Haitao Wang
}
\affil{Department of Computer Science \\
Utah State University, Logan, UT 84322, USA
\\ {\tt haitao.wang@usu.edu}}
\begin{document}

\pagestyle{plain}
\pagenumbering{arabic}
\setcounter{page}{1}
\date{}

\thispagestyle{empty}
\maketitle

\vspace{-0.35in}
\begin{abstract}
Given a set of pairwise disjoint polygonal obstacles in the plane, finding an obstacle-avoiding  Euclidean shortest path between two points is a classical problem in computational geometry and has been studied extensively. The previous best algorithm was given by Hershberger and Suri [FOCS 1993, SIAM J. Comput. 1999] and the algorithm runs in $O(n\log n)$ time and $O(n\log n)$ space, where $n$ is the total number of vertices of all obstacles. The algorithm is time-optimal because $\Omega(n\log n)$ is a lower bound. It has been an open problem for over two decades whether the space can be reduced to $O(n)$. In this paper, we settle it by solving the problem in $O(n\log n)$ time and $O(n)$ space, which is optimal in both time and space; we achieve this by modifying the algorithm of Hershberger and Suri. Our new algorithm can build a shortest path map for a source point $s$ in $O(n\log n)$ time and $O(n)$ space, such that given any query point $t$, the length of a shortest path from $s$ to $t$ can be computed in $O(\log n)$ time and a shortest path can be produced in additional time linear in the number of edges of the path.
\end{abstract}


\section{Introduction}
\label{sec:intro}

Let $\calP$ be a set of pairwise disjoint polygonal obstacles with a total of $n$ vertices in the plane. The plane minus the interior of the obstacles is called the {\em free space}, denoted by $\calF$. Given two points $s$ and $t$ in $\calF$, we consider the problem of finding a Euclidean shortest path from $s$ to $t$ in $\calF$. This is a classical problem in computational geometry and has been studied extensively, e.g.,~\cite{ref:ChenCo13,ref:GhoshAn91,ref:HershbergerAn99,ref:MitchellA91,ref:MitchellSh96,ref:SharirOn86,ref:StorerSh94,ref:RohnertSh86,ref:GuibasOp89,ref:GuibasLi87,ref:HershbergerA91,ref:HershbergerCo94,ref:LeeEu84}.

Two main methods have been used to tackle the problem in the literature: the visibility graph method and the continuous Dijkstra method. The visibility graph method is to first construct the visibility graph of the vertices of $\calP$ along with $s$ and $t$, and then run Dijkstra's shortest path algorithm on the graph to find a shortest \st\ path. The best algorithms for constructing the visibility graph run in $O(n\log n+K)$ time~\cite{ref:GhoshAn91} or in $O(n+h\log^{1+\epsilon}h +K)$ time~\cite{ref:ChenA15} for any constant $\epsilon>0$, where $h$ is the number of obstacles of $\calP$ and $K$ is the number of edges of the visibility graph. Because $K=\Omega(n^2)$ in the worst case, the visibility graph method inherently takes quadratic time to solve the shortest path problem. To achieve a sub-quadratic time solution, Mitchell~\cite{ref:MitchellSh96} made a breakthrough and gave an $O(n^{3/2+\epsilon})$ time algorithm by using the continuous Dijkstra method. Also using the continuous Dijkstra method plus a novel conforming subdivision of the free space, Hershberger and Suri~\cite{ref:HershbergerAn99} presented an algorithm of $O(n\log n)$ time and $O(n\log n)$ space; the running time is optimal as $\Omega(n\log n)$ is a lower bound in the algebraic computation tree model. Hershberger and Suri~\cite{ref:HershbergerAn99} raised an open question whether the space of their algorithm can be reduced to $O(n)$.\footnote{An unrefereed report
\cite{ref:InkuluA10} announced an algorithm based on the
continuous Dijkstra approach with $O(n+h\log h\log n)$ time and $O(n)$ space.}

The continuous Dijkstra algorithms in both~\cite{ref:MitchellSh96} and~\cite{ref:HershbergerAn99} actually construct the {\em shortest path map}, denoted by $\spm(s)$, for a source point $s$. The map is of $O(n)$ size and can be used to answer shortest path queries. By building a point location data structure~\cite{ref:EdelsbrunnerOp86,ref:KirkpatrickOp83} on $\spm(s)$ in additional $O(n)$ time, given a query point $t$, the shortest path length from $s$ to $t$ can be computed in $O(\log n)$ time and a shortest \st\ path can be reported in time linear in the number of edges of the path.

The problem setting for $\calP$ is usually referred to as {\em polygonal domains} or {\em polygons with holes} in the literature. The shortest path problem in simple polygons is relatively easier~\cite{ref:GuibasOp89,ref:GuibasLi87,ref:HershbergerA91,ref:HershbergerCo94,ref:LeeEu84}. Guibas et al.~\cite{ref:GuibasLi87} gave an algorithm that can construct the shortest path map with respect to a source point in linear time. For two-point shortest path query problem where both $s$ and $t$ are query points, Guibas and Hershberger~\cite{ref:GuibasOp89,ref:HershbergerA91} built a data structure in linear time such that each query can be answered in $O(\log n)$ time. In contrast, the two-point query problem in polygonal domains is much more challenging: to achieve $O(\log n)$ time queries, the current best result uses $O(n^{11})$ space~\cite{ref:ChiangTw99}.

\subsection{Our result}

In this paper, by modifying the algorithm of Hershberger and Suri~\cite{ref:HershbergerAn99} (referred to as the {\em HS algorithm}), we show that the problem of finding a shortest path among obstacles in $\calP$ is solvable in $O(n\log n)$ time and $O(n)$ space, which is optimal in both time and space. This answers the longstanding open question of Hershberger and Suri~\cite{ref:HershbergerAn99}. Our algorithm actually constructs the shortest path map $\spm(s)$ for a source point $s$ in $O(n\log n)$ time and $O(n)$ space. We give an overview of our approach below.

The reason that the HS algorithm needs $O(n\log n)$ space is two-fold. First, it uses fully persistent binary trees (with the path-copying method) to represent wavefronts. Because there are $O(n)$ events in the wavefront propagation algorithm and each event costs $O(\log n)$ additional space on a persistent tree, the total space consumed in the algorithm is $O(n\log n)$. Second, in order to construct $\spm(s)$ after the propagation algorithm, some previous versions of the wavefronts are needed, which are maintained in those persistent trees.
We resolve these two issues in the following way.

We still use persistent trees to represent wavefronts. However, since there are $O(n)$ events in the propagation algorithm, we divide the algorithm into $O(\log n)$ phases such that each phase has no more than $n/\log n$ events. The total additional space for processing the events using persistent trees in each phase is $O(n)$. At the end of each phase, we ``reset'' the space of the algorithm by only storing a ``snapshot'' of the algorithm (and discarding all other used space) so that (1) the snapshot contains sufficient information for the subsequent algorithm to proceed as usual, and (2) the total space of the snapshot is $O(n)$. More specifically, the HS algorithm relies on a {\em conforming subdivision} of the free space $\calF$ to guide the wavefront propagation; our snapshot is composed of the wavefronts of a set of edges of the subdivision (intuitively, those are the edges in the forefront of the current wavefront).
In this way, we reset the space to $O(n)$ at the end of each phase. As such, the total space of the propagation algorithm is bounded by $O(n)$. This resolves the first issue.

For the second issue of constructing $\spm(s)$, the HS algorithm relies on some historical wavefronts produced during the propagation algorithm, plus some marked wavelet generators (a wavelet generator is either $s$ or a vertex of $\calP$) for each cell of the conforming subdivision; the total number of marked generators for all cells of the subdivision is $O(n)$. Due to the space-reset, our algorithm does not maintain historical wavefronts anymore, and thus we need to somehow restore these wavefronts. To this end, a key observation is that by marking a total of $O(n)$ additional wavelet generators it is possible to restore all historical wavefronts that are needed for constructing $\spm(s)$. In this way, $\spm(s)$ can be constructed in $O(n\log n)$ time and $O(n)$ space.


\paragraph{Outline.}
The rest of the paper is organized as follows. Section~\ref{sec:pre} defines notation and introduces some concepts. As our algorithm is a modification of the HS algorithm, we briefly review the HS algorithm in Section~\ref{sec:overview} and refer the reader to~\cite{ref:HershbergerAn99} for details. Our modified algorithm is described in Section~\ref{sec:newalgo}. Section~\ref{sec:con} concludes with remarks on some problem extensions and other possible applications of our technique.

\section{Preliminaries}
\label{sec:pre}

For any two points $p$ and $q$ in the free space $\calF$ of $\calP$, we use $\pi(p,q)$ to denote a shortest path from $p$ to $q$ in $\calF$. Note that $\pi(p,q)$ may not be unique, in which case $\pi(p,q)$ may refer to an arbitrary shortest path. Let $d(p,q)$ denote the length of $\pi(p,q)$; we call $d(p,q)$ the {\em geodesic distance} between $p$ and $q$. For two line segments $e$ and $f$ in the free space $\calF$, their {\em geodesic distance} is defined to be the minimum geodesic distance between any point on $e$ and any point on $f$, i.e., $\min_{p\in e, q\in f}d(p,q)$; by slightly abusing the notation, we use $d(e,f)$ to denote their geodesic distance.

For any two points $a$ and $b$ in the plane, denote by $\overline{ab}$ the line segment with $a$ and $b$ as endpoints; denote by $|\overline{ab}|$ the length of the segment.

Throughout the paper, let $s$ represent the source point. For any point $p$ in the free space $\calF$, the vertex prior to $p$ in $\pi(s,p)$ is called the {\em predecessor} of $p$ in $\pi(s,p)$.
The {\em shortest path map} $\spm(s)$ of $s$ is a decomposition of $\calF$ into maximal regions such that each region $R$ has an obstacle vertex that is the predecessor of all points in $R$; each edge of $R$ is either a fragment of an obstacle edge or a portion of a {\em bisector}, which is the locus of points $p$ with $d(s,u)+|\overline{pu}|=d(s,v)+|\overline{pv}|$ for two obstacle vertices $u$ and $v$ (we use $B(u,v)$ to denote their bisector). $B(u,v)$ is in general a hyperbola; a special case happens if one of $u$ and $v$ is the predecessor of the other, in which case $B(u,v)$ is on a straight line.

For any compact region $A$ of the plane, let $\partial A$ denote its boundary. We use $\partial \calP$ to denote the union of the boundaries of all obstacles of $\calP$.
We call the vertices of $\calP$ and $s$ the {\em obstacle vertices} and call the edges of $\calP$ the {\em obstacle edges}.

\paragraph{The conforming subdivision $\calS'$ of the free space.}
A novel contribution of the paper~\cite{ref:HershbergerAn99} is a conforming subdivision $\calS'$ of the free space $\calF$. We briefly review it here by following the notation in~\cite{ref:HershbergerAn99}.
Let $V$ denote the set of all obstacle vertices of $\calP$ plus the source point $s$.

The conforming subdivision $\calS'$ is built upon a conforming subdivision $\calS$ with respect to the points of $V$ (ignoring the obstacle edges). The subdivision $\calS$ is a quad-tree-style subdivision of the plane into $O(n)$ cells with the following properties. (1) Each cell is either a square or a square annulus  (i.e., an outer square with an inner square hole). (2) Each cell of $\calS$ contains at most one point of $V$ (only square cells can contain points of $V$). (3) Each edge of $\calS$ is axis-parallel. (3) Each edge $e$ of $\calS$ is {\em well-covered}, i.e., there exists a set of $O(1)$ cells whose union $\calU(e)$ contains $e$ with the following three properties: (a) the size of  $\calU(e)$ is $O(1)$; (b) for each edge $f$ on $\partial \calU(e)$, the Euclidean distance between $e$ and $f$ (i.e., the minimum $|\overline{pq}|$ among all points $p\in e$ and $q\in f$) is at least $2\cdot\max\{|e|,|f|\}$; (c) $\calU(e)$, which is called the {\em well-covering region} of $e$, contains at most one point of $V$.

The conforming subdivision $\calS'$ of the free space $\calF$ is built by inserting the obstacle edges into $\calS$.
More specifically, $\calS'$ is a subdivision of $\calF$ into $O(n)$ cells with the following properties. (1) Each cell of $\calS'$ is one of the connected components formed by intersecting $\calF$ with an axis-parallel rectangle (which is the union of a set of adjacent cells of $\calS$) or a square annulus of $\calS$. (2) Each cell of $\calS'$ contains at most one point of $V$. (3) $\calS'$ has two types of edges: {\em opaque edges}, which are fragments of obstacles edges, and {\em transparent edges}, which are introduced by the subdivision construction (these edges are in the free space $\calF$); each transparent edge is axis-parallel. (4) Each point of $V$ is incident to a transparent edge. (5) Each transparent edge $e$ of $\calS'$ is {\em well-covered}, i.e., there exists a set of $O(1)$ cells whose union $\calU(e)$ contains $e$ with the following three properties: (a) the size of $\calU(e)$ is $O(1)$; (b) for each edge $f$ on $\partial \calU(e)$, the geodesic distance $d(e,f)$ between $e$ and $f$ is at least $2\cdot\max\{|e|,|f|\}$; (c) $\calU(e)$, which is called the {\em well-covering region} of $e$, contains at most one point of $V$.

Both subdivisions $\calS$ and $\calS'$ can be constructed in $O(n\log n)$ time and $O(n)$ space~\cite{ref:HershbergerAn99}.



\section{A brief review of the HS algorithm}
\label{sec:overview}

In this section, we briefly review the HS algorithm. We refer the readers to~\cite{ref:HershbergerAn99} for details.

\begin{figure}[t]
\begin{minipage}[t]{\textwidth}
\begin{center}
\includegraphics[height=2.0in]{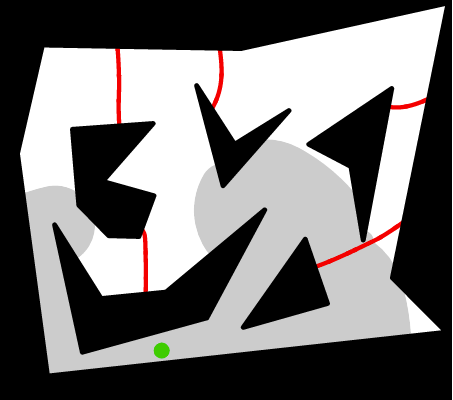}
\caption{\footnotesize Illustrating the wavefront. The black region are obstacles. The green point is $s$. The red curves are bisectors of $\spm(s)$. The gray region is the free space that has been covered by the wavefront. The boundary between the white region and the grey region is the wavefront. The figure is generated using the applet at~\cite{ref:HershbergerGe14}.}
\label{fig:wavefront}
\end{center}
\end{minipage}
\vspace{-0.15in}
\end{figure}


The algorithm starts generating a wavefront from the source point $s$ and uses the conforming subdivision $\calS'$ to guide the wavefront expansion. At any moment of the algorithm, the wavefront consists of all points of $\calF$ with the same geodesic distance from $s$ (e.g., see Fig.~\ref{fig:wavefront}). The wavefront is composed of a sequence of {\em wavelets}, each centered at an obstacle vertex that is already covered by the wavefront (the obstacle vertex is called the {\em generator} of the wavelet). To simulate the wavefront expansion, the algorithm produces the wavefront passing through each transparent edge of $\calS'$. As it is difficult to compute a true wavefront for each transparent edge $e$ of $\calS'$, a key idea of the HS algorithm is to compute two one-sided wavefronts (called {\em approximate wavefronts}) for $e$, each representing the wavefront coming from one side of $e$. Intuitively, an approximate wavefront from one side of $e$ is what the true wavefront would be if the wavefront were blocked off at $e$ by considering $e$ as an (open) opaque obstacle segment. To limit the interaction between approximate wavefronts from different sides of $e$, when an approximate wavefront propagates across $e$, an artificial wavelet is created at each endpoint of $e$. This is a mechanism to eliminate an approximate wavefront from one side of $e$ if it arrives at $e$ later than the approximate wavefront from the other side of $e$.

In the following, unless otherwise stated, a wavefront at a transparent edge $e$ of $\calS'$ refers to an approximate wavefront. We use $W(e)$ to denote a wavefront at $e$. As $e$ has two wavefronts, to make the description concise, depending on the context, $W(e)$ may refer to both wavefronts, i.e., the discussion on $W(e)$ applies to both wavefronts. For example, ``construct the wavefronts $W(e)$'' means construct the two wavefronts at $e$. Also, unless otherwise stated,
an edge of $\calS'$ refers to a transparent edge.

For each transparent edge $e$ of $\calS'$,
define $input(e)$ as the set of transparent edges on the boundary of $\calU(e)$ and define
$output(e)=input(e)\cup \{g\ |\ e\in input(g)\}$\footnote{The set $input(e)$ is included in $output(e)$ because the algorithm relies on the property that edges of $output(e)$ form a cycle enclosing $e$.}. As the size of $\calU(e')$ for each transparent edge $e'$ of $\calS'$ is $O(1)$, both $|input(e)|$ and $|output(e)|$ are $O(1)$.
The wavefronts at $e$ are computed from the wavefronts at all edges of
$input(e)$; this guarantees the correctness because $e$ is in $\calU(e)$ (and thus
the shortest path $\pi(s,p)$ passes through some edge $f\in input(e)$ for any point $p\in e$).
After the wavefronts at $e$ are computed, they will pass
to the edges of $output(e)$. In the meanwhile, the geodesic distances
from $s$ to both endpoints of $e$ will be computed. By the property (4) of $\calS'$, each
obstacle vertex is incident to a transparent edge of $\calS'$. Hence,
after the algorithm is finished, geodesic distances from $s$ to all
obstacle vertices will be available. The process of passing the
wavefronts $W(e)$ at $e$ to all edges $g\in output(e)$ is called the
{\em wavefront propagation procedure} (see Section 5 in~\cite{ref:HershbergerAn99}), which will compute the wavefront $W(e,g)$,
the portion of $W(e)$ that passes to $g$ through the well-covering region $\calU(g)$ of $g$ if $e\in input(g)$ and through $\calU(e)$ otherwise;
whenever the procedure is invoked on $e$, we say that $e$ is {\em
processed}. The wavefronts $W(e)$ at $e$ are constructed by
merging the wavefronts $W(f,e)$ for all edges $f\in input(e)$; this process is called the
{\em wavefront merging procedure}  (see Lemma 4.6 in~\cite{ref:HershbergerAn99}).

The transparent edges of $\calS'$ are processed in a rough time order.
The wavefronts $W(e)$ of each edge $e$ are constructed at
the time $\td(s,e)+|e|$, where $\td(s,e)$ is the minimum
geodesic distance from $s$ to the two endpoints of $e$. At the time $\td(s,e)+|e|$, it is guaranteed that all edges $f\in input(e)$ whose wavefronts $W(f)$ contribute a wavelet to $W(e)$ have already been processed (see Lemma~4.2~\cite{ref:HershbergerAn99}); this is due to the property 5(b) of $\calS'$ that $d(e,f)\geq 2\cdot \max\{|e|,|f|\}$ (because $f$ is on $\partial\calU(e)$).

Define $covertime(e)=\td(s,e)+|e|$. The value $covertime(e)$ will be computed during the algorithm. Initially, for each edge $e$ whose well-covering region contains $s$, $covertime(e)$ is computed directly. The algorithm maintains a timer $\tau$ and processes the transparent edges $e$ of $\calS'$ following the order of $covertime(e)$. The main loop of the algorithm works as follows. Recall that ``processing an edge $e$'' means invoking the wavefront propagation procedure on $W(e)$.
As long as $\calS'$ has an unprocessed edge, we do the following. First, among all unprocessed edges, choose the one $e$ with minimum $covertime(e)$ and set $\tau=covertime(e)$. Second, call the
wavefront merging procedure to construct the wavefronts $W(e)$ from $W(f,e)$ for all edges $f\in input(e)$ satisfying $covertime(f)<covertime(e)$; compute $d(s,v)$  from $W(e)$ for each endpoint $v$ of $e$. Third, process $e$, i.e., call the wavefront propagation procedure on $W(e)$ to compute $W(e,g)$ for all edges $g\in output(e)$; in particular, compute the time $\tau_g$ when the wavefronts $W(e)$ first encounter an endpoint of $g$ and set $covertime(g)= \min\{covertime(g),\tau_g+|g|\}$.



Each wavefront $W(e)$ is represented by a persistent balanced binary trees by path copying~\cite{ref:DriscollMa89} so that we can operate on an old version of $W(e)$ even after it is used to construct wavefronts $W(e,g)$ for some $g\in output(e)$. Due to the path-copying of the persistent data structure, each operation on $W(e)$ will cost $O(\log n)$ additional space. Note that in the wavefront merging procedure,
because we do not need the old versions of $W(f,e)$ after $W(e)$ is constructed, it is not necessary to use the path-copying method after each operation on $W(f,e)$ in the procedure. Hence, the total additional space in the wavefront merging procedure in the entire algorithm is $O(n)$. But $O(n\log n)$ additional space is needed in the wavefront propagation procedure because after it is used to compute $W(e,g)$ for some $g\in output(e)$, the old version of $W(e)$ needs to be kept for computing $W(e,g')$ for other $g'\in output(e)$.
There are $O(n)$ {\em bisector events} in the wavefront propagation procedure in the entire algorithm. In particular, if two generators $u$ and $v$ are not adjacent in $W(e)$ but become adjacent during the propagation from $e$ to an edge $g\in output(e)$, then there is a bisector event involving $u$ and $v$. The intersection between a bisector and an obstacle edge will also cause a bisector event. There are a total of $O(n)$ operations on persistent trees during the wavefront propagation procedure (we call them {\em wavefront propagation operations}) and each such operation requires $O(\log n)$ additional space in the persistent trees; thus the total space needed in the wavefront propagation procedure is $O(n\log n)$.

The algorithm halts once all transparent edges of $\calS'$ are processed. Then,
geodesic distances from $s$ to all obstacle vertices are computed. This is actually the first main step of the HS algorithm, which we refer to as the {\em wavefront expansion step} (we could call it the wavefront propagation step, but to distinguish it from the wavefront propagation procedure, we use ``expansion'' instead). The second main step of the HS algorithm is to construct the shortest path map $\spm(s)$, which we call the {\em SPM-construction} step.

During the above wavefront expansion step, some generators are marked so that if a generator $v$ is involved in a true bisector event of $\spm(s)$ in a cell $c$ of $\calS'$, then $v$ is guaranteed to be marked for $c$. Also, after the wavefront expansion step, thanks to the persistent binary trees, the wavefronts $W(e)$ for all transparent edges $e$ of $\calS'$ are still available. To construct $\spm(s)$, its vertices in each cell $c$ of $\calS'$ are computed. To this end, the marked generators for $c$ and the wavefronts for the transparent edges on the boundary of $c$ are utilized, as follows.

First, $c$ is partitioned into {\em active} and {\em inactive} regions so that no vertices of $\spm(s)$ are in inactive regions. Because only marked generators possibly contribute to true bisector events in $c$, the portion of a bisector defined by a marked generator and an unmarked generator in $c$ belongs to an edge of $\spm(s)$. All such bisectors, which must be disjoint in $c$, partition $c$ into regions each of which is claimed either by only marked generators or by only unmarked generators; the former regions are active while the latter are inactive.

Consider an active region $R$, whose boundary consists of $O(1)$ segments, each of which is a transparent edge fragment, an obstacle edge fragment, or a bisector portion in $\spm(s)$. For each transparent edge fragment $e$ of $\partial R$, we use $W(e)$ to partition $R$ into sub-regions, each with a unique predecessor in $W(e)$. This can be done in $O(|W(e)|\log |W(e)|)$ time by propagating the wavefront $W(e)$ into $R$ using the wavefront propagation procedure (and ignoring the wavefronts of other transparent edges of $\partial R$). Let $S(e)$ denote the partition of $R$.

\paragraph{Remark.} Since the wavefront $W(e)$ is not useful anymore after the above step, here in the wavefront propagation algorithm for computing $S(e)$ we do not need to use the path-copying method for each operation on the tree representing $W(e)$. Therefore, constructing the partition $S(e)$ takes only linear additional space. Consequently, the total additional space for constructing $\spm(s)$ is $O(n)$. Although this does not matter for the HS algorithm, it helps in our new algorithm.
\bigskip

The partitions $S(e)$ for all transparent edges $e\in \partial R$ are merged in a similar way as the merge step in the standard divide-and-conquer Voronoi diagram algorithm~\cite{ref:ShamosCl75}; during the merge process, the vertices of $\spm(s)$ in $R$ are computed (see Lemma~4.12 in~\cite{ref:HershbergerAn99}).

Applying the above for all active regions of $c$ computes all vertices of $\spm(s)$ in $c$.

After the vertices of $\spm(s)$ in all cells $c\in \calS'$ are computed, which in total takes $O(n\log n)$ time and $O(n)$ space as remarked above, $\spm(s)$ can be constructed in $O(n\log n)$ additional time and $O(n)$ space by first computing all edges of $\spm(s)$ and then assembling the edges using a standard plane sweep algorithm (see Lemma~4.13 in~\cite{ref:HershbergerAn99}).

In summary, the SPM-construction step of the HS algorithm runs in $O(n\log n)$ time and $O(n)$ space.

\section{Our new algorithm}
\label{sec:newalgo}

In this section, we present our modified algorithm, which uses $O(n\log n)$ time and $O(n)$ space. It turns out that we have to modify both main steps of the HS algorithm.
We first give an overview of our approach in Section~\ref{subsec:overview}.  Then we describe our modification to the wavefront expansion step in Section~\ref{subsec:expansion} while the proof of a key lemma is given in Section~\ref{subsec:lemma}. Section~\ref{sublem:spm} discusses the SPM-construction step.

\subsection{Overview}
\label{subsec:overview}

Our modification to the wavefront expansion step is mainly on the wavefront propagation procedure. As there are a total of $O(n)$ wavefront propagation operations, we divide the algorithm into $O(\log n)$ phases such that each phase has no more than $n/\log n$ operations. We still use persistent binary trees to represent wavefronts $W(e)$. To reduce the space, at the end of each phase, we ``reset'' the space of the algorithm by storing a ``snapshot'' of the algorithm and discarding the rest of the used space. The snapshot consists of wavefronts $W(e)$ of a set of transparent edges $e$ of $\calS'$. We show that these wavefronts are sufficient for the subsequent algorithm to proceed as usual. We prove that the total space of the snapshot is $O(n)$; this is a main challenge of our approach and Section~\ref{subsec:lemma} is devoted to this. Since each phase has $n/\log n$ wavefront propagation operations, the total additional space in each phase is $O(n)$. Due to the space reset, the total space used in the algorithm is $O(n)$.

For the SPM-construction step, it considers each cell of $\calS'$ individually. For each cell $c$, the algorithm has two sub-steps. First, compute the active regions of $c$. Second, for each active region $R$, compute the vertices of $\spm(s)$ in $R$. For both sub-steps, the original HS algorithm utilizes the wavefronts of the transparent edges on the boundary of $c$. Due to the space reset, the wavefronts are not available anymore in our new algorithm. We use the following strategy to resolve the issue. First, to compute the active regions in $c$, we need to know the bisectors defined by an unmarked generator $u$ and a marked generator $v$. We observe that $u$ is a generator adjacent to $v$ in the wavefronts along the boundary of $c$. Based on this observation, in our new algorithm we will mark the neighbors of the generators originally marked in the HS algorithm and we call them {\em newly-marked} generators (the generators marked in the original HS algorithm are called {\em originally-marked} generators). We show that the newly-marked generators and the originally-marked generators are sufficient for computing all active regions of each cell $c$, and the total number of all marked generators is still $O(n)$. Second, to compute the vertices of $\spm(s)$ in each active region $R$ of $c$, we need to restore the wavefronts of the transparent edges on the boundary of $R$. To this end, we observe that these wavefronts are exactly determined by the originally-marked generators. Consequently, the same algorithm as before can be applied to construct $\spm(s)$ in $O(n\log n)$ time, but the space is only $O(n)$.


\subsection{The wavefront expansion step}
\label{subsec:expansion}

In the wavefront expansion step, our main modification is on the wavefront propagation procedure, which is to compute the wavefronts $W(e,g)$ for all edges $g\in output(e)$ using the wavefront $W(e)$.

We now maintain a counter $count$ in the wavefront expansion step to record the number of wavefront propagation operations that have been executed so far since the last space reset; $count=0$ initially. Consider a wavefront propagation procedure on a wavefront $W(e)$ of a transparent edge $e$. The algorithm will compute $W(e,g)$ for each edge $g\in output(e)$, by propagating $W(e)$ through the well-covering region $\calU(g)$ of $g$ if $e\in input(g)$ and through $\calU(e)$ otherwise. We apply the same algorithm as before. For each wavefront propagation operation, we first do the same as before. Then, we increment $count$ by one. If $count< n/\log n$, we proceed as before (i.e., process the next wavefront operation). Otherwise, we have reached the end of the current phase and start a new phase. To do so, we first reset $count=0$ and then reset the space by constructing and storing a snapshot of the algorithm, as follows.

\begin{enumerate}
\item
Let $g$ refer to the edge of $output(e)$ whose $W(e,g)$ is currently being computed in the algorithm. We store the tree that is currently being used to compute $W(e,g)$ right after the above wavefront propagation operation. To do so, we can make a new tree by copying the newest version of the current persistent
tree the algorithm is operating on, and thus the size of the tree is $O(n)$. We will use this tree to ``resume'' computing $W(e,g)$ in the subsequent algorithm.

\item
For each $g'\in output(e)\setminus\{g\}$ whose $W(e,g')$
has been computed, we store the tree for $W(e,g')$. We will use the tree to
compute the wavefronts $W(g')$ of $g'$ in the subsequent algorithm.

\item
We store the tree for the wavefront $W(e)$. Note that the tree may have many versions due to performing the wavefront propagation operations and we only keep its original version for $W(e)$. Hence, the size of the tree is $O(|W(e)|)$. This tree
will be used in the subsequent algorithm to compute $W(e,g')$ for those edges $g'\in output(e)\setminus\{g\}$ whose
$W(e,g')$ have not been computed.

\item
We check every transparent edge $e'$ of $\calS'$ with $e'\neq e$. If $e'$ has been
processed (i.e., the wavefront propagation procedure has been called
on $W(e')$) and there is an edge $g'\in output(e')$ that has {\em not} been processed, we know that $W(e',g')$ has been computed and is available; we store the tree for $W(e',g')$. We will use the tree to compute the wavefronts $W(g')$ of $g'$ in the subsequent algorithm.
\end{enumerate}

We refer to the above four steps as the {\em snapshot construction algorithm}. We refer to the wavefronts stored in the algorithm as the {\em snapshot}; intuitively, the snapshot contains all wavelets in the forefront of the current global wavefront.

The following lemma shows that if we discard all persistent trees currently used in the algorithm and instead store the snapshot, then the algorithm will proceed without any issues.

\begin{lemma}
The snapshot stores sufficient information for the subsequent algorithm to proceed as usual.
\end{lemma}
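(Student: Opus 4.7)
The plan is to show sufficiency by case-analyzing every wavefront the algorithm can read after the reset and matching each read to one of the four snapshot items. First I would observe that all non-wavefront bookkeeping --- the subdivision $\calS'$, the priority queue keyed by $covertime$, the marked generators, and the geodesic distances $d(s,v)$ already recorded at endpoints of processed edges --- is held outside the persistent trees and is therefore untouched by the reset. This reduces the lemma to the claim that every wavefront read after the reset can be supplied by the snapshot.

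I would then split the continuation's reads into two families: those needed to finish the propagation from the currently-processing edge $e$, and those needed to process any future edge $e''$. For the first family the matching is direct: item~1 lets the in-progress computation of $W(e,g)$ resume literally from the operation where it stopped; item~3 preserves the original version of $W(e)$, from which any still-pending $W(e,g')$ can be computed by a fresh call to the propagation procedure, exactly as in the unreset algorithm; and item~2 retains every $W(e,g')$ already produced so that when the corresponding $g'$ is later merged its contribution is still available.

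For the second family, recall that $W(e'')$ is built by merging $W(f,e'')$ over all $f \in input(e'')$ with $covertime(f) < covertime(e'')$. I would decompose such an $f$ into three cases: $f$ was processed strictly before $e$, $f = e$, or $f$ has not yet been processed at the time of the reset. The third case requires no stored tree because $W(f,e'')$ will be produced when $f$ is later processed. The second case routes through items~1, 2, or 3 depending on whether $W(e,e'')$ is the in-progress computation, has already been completed, or has yet to be started.

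The main obstacle I expect is the first case: I must confirm that for every processed $f$ with an unprocessed $e'' \in output(f)$ the tree $W(f,e'')$ is actually present in the snapshot. This is precisely the content of item~4 once one reads its quantifier as universal over all such pairs, which is how the item is stated. Once that point is pinned down, the three cases exhaust the possibilities, every merging and propagation step performed after the reset finds its inputs in the snapshot, and the algorithm produces the same values of $W(\cdot)$, $d(s,\cdot)$, and updated $covertime$ entries as it would have without the reset, proving the lemma.
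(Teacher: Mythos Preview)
Your proposal is correct and follows essentially the same approach as the paper: a case analysis matching each post-reset wavefront read to one of the four snapshot items. The paper organizes the cases by which procedure (merging vs.\ propagation) is running at the later moment $\xi'$, whereas you organize them by which edge's data is being consumed; the resulting case split and the mapping to items~1--4 are the same.
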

\begin{proof}
Let $\xi$ denote the moment of the algorithm when the snapshot construction algorithm starts.
Let $\xi'$ be any moment of the algorithm after the snapshot is constructed. In the following, we show that if the algorithm needs any information that was computed before $\xi$ in order to perform certain operation at $\xi'$, then that information is guaranteed to be stored at the snapshot. This will prove the lemma.

At $\xi$, the transparent edges of $\calS'$ excluding $e$ can be
classified into two categories: those that have already been processed
at $\xi$ and those that have not been processed. Let $E_1$ and $E_2$
denote the sets of the edges in these two categories, respectively. The edge $e$ is special in the sense that it is currently being processed at $\xi$.

At $\xi'$, the algorithm may be either in the wavefront merging procedure or in the wavefront propagation procedure. We discuss the two cases separately.

\paragraph{The wavefront merging procedure case.}
Suppose the algorithm is in the wavefront merging procedure at $\xi'$,
which is to compute $W(e')$ for some edge $e'$ by merging all
wavefronts $W(f',e')$ for $f'\in input(e')$. Because at $\xi'$ we need
some information that was computed before $\xi$, that information must
be the wavefront $W(f',e')$ for some edge $f'\in input(e')$.
Depending on whether $f'=e$, there are two
subcases.

\begin{itemize}
\item
If $f'=e$, then since $W(e,e')$ was computed before $\xi$, $W(e,e')$ is stored in the snapshot by step~(2) of the snapshot construction algorithm.

\item
If $f'\neq e$, then since $W(f',e')$ was computed before $\xi$, the
edge $f'$ must have been processed at $\xi$.
Because the algorithm is computing $W(e')$ at $\xi'$, $e'$ has not been
processed at $\xi$. Therefore, $W(f',e')$ is stored in the snapshot by
step~(4) of the snapshot construction algorithm.
\end{itemize}

Hence, in either subcase, the information needed by the algorithm at $\xi'$ is stored at the snapshot.

\paragraph{The wavefront propagation procedure case.}
Suppose the algorithm is in the wavefront propagation procedure at
$\xi'$, which is to process a transparent edge $e'$, i.e., compute $W(e',g')$ for
some $g'\in output(e')$. Because at $\xi'$ the algorithm needs some information that
was computed before $\xi$ to compute $W(e',g')$, $W(e')$ must have been computed before $\xi$ (since $W(e',g')$ relies on $W(e')$).

We claim that $e'$ must be $e$. Indeed, if $e'\in E_1$, then $e'$ has
been processed before $\xi$ and thus the wavefront propagation
procedure cannot happen to $e'$ at $\xi'$, a contradiction. If $e'\in
E_2$, then $e'$ has not been processed at $\xi$. According to our
algorithm, for any transparent edge $e''$, $W(e'')$ is computed during
the wavefront merging procedure for $e''$, which is immediately
followed by the wavefront propagation procedure to process $e''$.
Since at $\xi$ the algorithm is in the wavefront propagation procedure
to process $e$, the wavefront merging procedure for $e'$ must have not
been invoked at $\xi$, and thus $W(e')$ must have not been computed at
$\xi$. This contradicts with the fact that $W(e')$ has been computed
at $\xi$. Therefore, $e'$ must be $e$.

Depending on whether $g'$ is $g$, there are two subcases.
\begin{itemize}
\item
If $g'=g$, then the tree at the moment $\xi$ during the propagation for computing $W(e,g)$ from $W(e)$ is stored in the snapshot by the step (1) of the snapshot construction algorithm, and thus we can use the tree to ``resume'' computing $W(e,g)$ at $\xi'$.

\item
If $g'\neq g$, then in order to compute $W(e,g')$ at $\xi'$, we need the wavefront $W(e)$, which is stored in the snapshot by step~(3) of the snapshot construction algorithm.
\end{itemize}

Hence, in either subcase, the information needed by the algorithm at $\xi'$ is stored at the snapshot.

The lemma thus follows.
\end{proof}

A challenging problem is to bound the space of the snapshot, which is established in the following lemma. As the proof is lengthy and technical, we devote the next subsection to it.

\begin{lemma}\label{lem:space}
The total space of the snapshot is $O(n)$.
\end{lemma}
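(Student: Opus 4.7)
The plan is to bound the snapshot component by component. Items~(1),~(2), and~(3) of the snapshot construction algorithm each store $O(|output(e)|) = O(1)$ trees, each of size $O(|W(e)|) = O(n)$, so together they contribute $O(n)$ space. The main task is to bound item~(4), the sum
\[
\Sigma \ := \ \sum_{(e',g') \in F}|W(e',g')|,
\]
where $F$ is the set of ``frontier'' pairs $(e',g')$ with $e'$ already processed at the snapshot moment and $g'\in output(e')$ not yet processed.

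My strategy for item~(4) is to charge the wavelets in frontier wavefronts against features of the shortest path map $\spm(s)$, whose total complexity is $O(n)$. The key observation is that $|W(e',g')|$ equals $1$ plus the number of bisector breakpoints along $g'$, since adjacent wavelets of $W(e',g')$ are separated by the bisector of their generators. Since $|F|\le \sum_{e'}|output(e')|=O(n)$, it suffices to bound the total number of bisector breakpoints summed over all frontier wavefronts.

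I plan to split each breakpoint into two categories: (a)~those that coincide with a true edge of $\spm(s)$ crossing $g'$, and (b)~transient ones, separating wavelets that are later pruned. For~(a), I would bound the total $\spm(s)$-edge to transparent-edge crossings by $O(n)$ using the well-covering property: each edge of $\spm(s)$ lies within the well-covering region of some transparent edge and is therefore confined to $O(1)$ cells of $\calS'$, each cell has $O(1)$ transparent boundary edges, and $\spm(s)$ has $O(n)$ edges overall. For~(b), I would use the spacing bound $d(e',f')\ge 2\max\{|e'|,|f'|\}$ to argue that each transient bisector's trace through $F$ is confined to $O(1)$ frontier pairs before being absorbed, so that the total transient contribution across the algorithm's $O(n)$ bisector events remains $O(n)$.

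The principal obstacle I anticipate is category~(b): bounding the frontier contribution of transient bisectors by $O(n)$. This requires amortizing over the $O(n)$ bisector events and carefully exploiting the geometric spacing from the well-covering property to rule out any single transient bisector persisting across many frontier pairs at once. I expect Section~\ref{subsec:lemma} to implement this via a detailed analysis of wavelet survival as approximate wavefronts propagate through the conforming subdivision, combining the well-covering geometry with the temporal ordering imposed on processing by the values $covertime(\cdot)$.
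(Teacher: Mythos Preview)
Your treatment of items~(1)--(3) matches the paper. For item~(4), your high-level instinct---that the well-covering geometry together with the $covertime(\cdot)$ ordering should force each ``bisector feature'' to survive across only $O(1)$ frontier pairs---is correct and is indeed what the paper exploits. But the concrete plan has a genuine gap.

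The claim in category~(a), that ``each edge of $\spm(s)$ lies within the well-covering region of some transparent edge and is therefore confined to $O(1)$ cells of $\calS'$,'' is false as stated. A bisector edge of $\spm(s)$ can be an arbitrarily long hyperbolic arc threading through many cells of the subdivision; nothing about the construction of $\calS'$ confines it to a single well-covering region. So the bound on $\spm$-edge/transparent-edge crossings does not follow from what you wrote. More importantly, the split into ``true $\spm$'' bisectors versus ``transient'' ones is not the right decomposition here: the wavefronts $W(e',g')$ are \emph{approximate} wavefronts, so a bisector between two adjacent generators in $W(e',g')$ need not correspond to an $\spm$ edge at all, and conversely a true $\spm$ bisector may fail to appear in a given approximate wavefront. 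The two categories do not cleanly partition the breakpoints you want to count.

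The paper avoids this by working purely combinatorially. It counts \emph{adjacent-generator-triples} $(u,v,w)$ rather than bisector crossings: there are $O(n)$ distinct such triples over the entire algorithm (each new one requires a bisector event), and the core lemma shows that any fixed triple $(u,v,w)$ can appear in $W(e,g)$ for at most $O(1)$ pairs $(e,g)$ in the frontier. The proof of that lemma is exactly the covertime argument you anticipate in~(b): if $(u,v,w)$ appeared in too many frontier pairs, one could find two pairs $(e_1,g_1)$ and $(e_2,g_2)$ with $e_2$ outside $\calU(g_1)\cup\calU(e_1)$, and then the well-covering inequality $d(e,f)\ge 2\max\{|e|,|f|\}$ forces $covertime(g_1)<covertime(e_2)$, contradicting that $e_2$ is processed while $g_1$ is not. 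So your instinct for~(b) was on target; the missing piece is to drop the $\spm$-based decomposition and instead count triples, which gives a uniform $O(n)$ budget to charge against.
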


Since each phase has no more than $n/\log n$ wavefront propagation operations, the total extra space introduced by the persistent trees in each phase is $O(n)$. Due to the space-reset and Lemma~\ref{lem:space}, the total space of the algorithm is $O(n)$.

For the running time of our new algorithm, comparing with the original HS algorithm we spend extra time on constructing the snapshot at the end of each phase.
In light of Lemma~\ref{lem:space}, each call of the snapshot construction algorithm takes $O(n)$ time. As there are $O(\log n)$ phases, the total time on constructing the snapshots in the entire algorithm is $O(n\log n)$. Hence, the running time of our new algorithm is still bounded by $O(n\log n)$.

\subsection{Proof of Lemma~\ref{lem:space}}
\label{subsec:lemma}

In this subsection, we prove Lemma~\ref{lem:space}, i.e., prove that the space introduced by the four steps of the snapshot
construction algorithm is $O(n)$.

For the first step, as each wavefront has $O(n)$ generators, the size
of one tree is $O(n)$. For the second step, since $|output(e)|=O(1)$
and the size of each $W(e,g')$ is $O(n)$, the total space is $O(n)$.
For the third step, since $|W(e)|=O(n)$, the space is $O(n)$.
Below we focus on the fourth step.

Let $\Pi$ denote the collection of pairs $(e,g)$
whose wavefront $W(e,g)$ is stored in the fourth step of the
snapshot construction algorithm. Our goal is to show that $\sum_{(e,g)\in
\Pi}|W(e,g)|=O(n)$.

For an edge $e$, $\Pi$ may have multiple pairs with $e$ as
the first element and the second elements of all these pairs are in
$output(e)$; among all those pairs, we only keep the pair $(e,g)$
such that $|W(e,g)|$ is the largest in $\Pi$ and remove all other pairs from $\Pi$. Since
$|output(e)|=O(1)$, it suffices to show that the total sum of
$|W(e,g)|$ for all pairs $(e,g)$ in the new $\Pi$ is $O(n)$. Now in
the new $\Pi$, no two pairs have the same first element.
However, for an edge $g$, it is possible that there are multiple
pairs in $\Pi$ whose second elements are all $g$ and their first
elements are all from $input(g)$; among all those pairs, we only keep the pair $(e,g)$
such that $|W(e,g)|$ is the largest in $\Pi$ and remove all other pairs from $\Pi$. Since
$|input(g)|=O(1)$, it suffices to show that the total sum of
$|W(e,g)|$ for all pairs $(e,g)$ in the new $\Pi$ is $O(n)$. Now in
the new $\Pi$, no two pairs have the same first element and no two
pairs have the same second element. Hence, the size of $\Pi$ is
$O(n)$.

For each pair $(e,g)\in \Pi$, recall that $W(e,g)$ may have up to two
artificial generators (i.e., the endpoints of $e$). Since
$|\Pi|=O(n)$, there are a total of $O(n)$ artificial generators in
$W(e,g)$ for all pairs $(e,g)\in \Pi$. In the following discussion we
ignore these artificial generators and by slightly abusing the notation we use $W(e,g)$ to refer to the
remaining wavefront without the artificial generators.
Hence, each generator of $W(e,g)$ is an obstacle vertex.
It suffices to prove $\sum_{(e,g)\in \Pi}|W(e,g)|=O(n)$.

For any three adjacent generators $u$, $v$, and $w$ that are obstacle vertices in a wavefront, we call $(u,v,w)$ an {\em adjacent-generator-triple}. Two adjacent-generator-triples $(u_1,v_1,w_1)$ and $(u_2,v_2,w_2)$ are {\em distinct} if $u_1\neq u_2$, or $v_1\neq v_2$, or $w_1\neq w_2$.
We have the following observation.

\begin{observation}\label{obser:triple}
The total number of distinct adjacent-generator-triples in all wavefronts involved in the entire wavefront expansion step of the HS algorithm is $O(n)$.
\end{observation}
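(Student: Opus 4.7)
The plan is to charge every distinct adjacent-generator-triple to one of the $O(n)$ tree operations performed on wavefront representations during the wavefront expansion step, so that each operation is charged only $O(1)$ times. First I would observe that the adjacency structure of any wavefront can change only through (a) a tree operation in the wavefront propagation procedure (a wavelet birth at an obstacle vertex, a bisector event, or an obstacle-edge collision that removes a wavelet), or (b) a tree operation in the wavefront merging procedure that splices two input wavefronts at a break point. Consequently, if $(u,v,w)$ is a triple and $\tau$ is the earliest moment at which $v$ has $u$ and $w$ as its two consecutive obstacle-vertex neighbors in some wavefront, then at $\tau$ either a propagation operation or a merging operation must just have created that adjacency pair around $v$.

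Next I would invoke the bounds recalled in Section~\ref{sec:overview}: the total number of wavefront propagation operations is $O(n)$, and the total number of tree modifications in the wavefront merging procedures is likewise $O(n)$ (this is implicit in the statement that the merging procedure uses only $O(n)$ additional space in the non-persistent mode, since each modification contributes $O(1)$ to this space). Because each such operation inserts or deletes a single generator, or splices at a single break point, it alters only $O(1)$ adjacencies in the affected wavefront and therefore creates at most $O(1)$ new triples. The persistent-tree representation means that whenever a new wavefront is derived from an existing one --- for example, when $W(e,g)$ inherits the tree of $W(e)$ at the start of propagation, or when a fragment of $W(f,e)$ becomes a subtree of $W(e)$ during merging --- no fresh triples appear beyond those created by the subsequent operations, since every inherited adjacency was already accounted for in the parent version. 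A separate $O(1)$ accounting handles the initial wavefront at $s$, which contains no obstacle-vertex triples at all.

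Summing these contributions gives $O(n)$ distinct adjacent-generator-triples, which is the claim. The main technical subtlety will lie in the second step: carefully separating the ``inherited'' adjacency structure that is shared for free across persistent versions from the ``new'' adjacency structure created by each operation, and verifying that the right quantity to bound is the total number of merging operations (which is $O(n)$) rather than the total merged-wavefront size $\sum_e |W(e)|$, which a priori could be larger. Once this charging scheme is in place, the observation follows immediately from the HS complexity bounds.
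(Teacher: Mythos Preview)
Your proposal is correct and follows essentially the same charging idea as the paper: every new distinct adjacent-generator-triple is created by some event that changes the adjacency structure of a wavefront, and the total number of such events is $O(n)$ by the HS analysis. The paper's own proof is a terse two-liner that charges each new triple directly to a ``bisector event'' and cites the $O(n)$ bisector-event bound of~\cite{ref:HershbergerAn99}; your version is more explicit in separating propagation operations from merging operations and in arguing that inherited (persistent-tree) adjacencies incur no new charge, but the underlying argument is the same.
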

\begin{proof}
Initially, the algorithm starts a wavefront from $s$ with only one generator $s$. Whenever a distinct adjacent-generator-triple is produced during the algorithm, a bisector event must happen. As there are $O(n)$ bisector events~\cite{ref:HershbergerAn99}, the observation follows.
\end{proof}

Lemma~\ref{lem:space} is almost an immediate consequence of the following lemma.
\begin{lemma}\label{lem:gentriple}
Any adjacent-generator-triple $(u,v,w)$ can appear in the wavefront $W(e,g)$ for at most $O(1)$ pairs $(e,g)$ of $\Pi$.
\end{lemma}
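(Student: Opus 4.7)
\textbf{Proof plan for Lemma~\ref{lem:gentriple}.}
The plan is to pin down the triple geometrically and then use the reductions already made on $\Pi$ together with the well-covering property of $\calS'$ to rule out many frontier edges. Fix a triple $(u,v,w)$ and suppose it appears in $W(e,g)$ for some $(e,g)\in \Pi$. Since $(u,v,w)$ are consecutive in $W(e,g)$ with $v$ in the middle, both bisectors $B(u,v)$ and $B(v,w)$ cross $g$ inside the free space, and there is a sub-interval of $g$ on which $v$ dominates between $u$ and $w$. The first step is to show that this forces $g$ to cross a specific ``triple locus'' $\Lambda(u,v,w)$ defined as the (relatively open) set of points $p\in \calF$ for which $v$ minimizes $d(s,v)+|\overline{pv}|$ against $u$ and $w$, with $u$ and $w$ tied as runners-up on opposite sides of $v$; by the standard bisector geometry, $\Lambda(u,v,w)$ is a connected sliver bounded by sub-arcs of $B(u,v)$ and $B(v,w)$ (together with possibly fragments of $\partial\calP$).

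Next I would argue that only $O(1)$ transparent edges $g$ with $(e,g)\in\Pi$ can cross $\Lambda(u,v,w)$. Recall that in the reduced $\Pi$, the second coordinates are distinct, so it suffices to bound the number of distinct $g$'s. I would use the property 5(b) of $\calS'$: if $g$ appears in some $(e,g)\in\Pi$ with $(u,v,w)\in W(e,g)$, then $e\in input(g)$ (or vice versa) and therefore $d(e,g)\geq 2|g|$, and each of $u,v,w$ lies at geodesic distance at least $2|g|$ from $g$. Consequently the three wavelets reaching $g$ have radii that are large compared to $|g|$, so the portion of $\Lambda(u,v,w)$ inside a single well-covering region $\calU(g)$ is essentially a thin strip emanating from a single bisector vertex. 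Since $\calS'$ is a quadtree-style subdivision in which edges of comparable size are well-separated, only $O(1)$ such $g$'s of any given scale can cross this strip; moreover, once a larger $g$ with $(e,g)\in\Pi$ crosses $\Lambda$, its well-covering region absorbs all strictly smaller transparent edges that also cross $\Lambda$, so scales cannot accumulate freely.

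To formalize ``only $O(1)$ such $g$'s across scales,'' I would argue by contradiction: if there were many edges $g_1,g_2,\dots$ with $(e_i,g_i)\in \Pi$ and $(u,v,w)\in W(e_i,g_i)$, then at least two of them, say $g_i$ and $g_j$, would have $|g_i|$ and $|g_j|$ within a factor of $2$ of each other and both lie within a bounded number of levels of each other in the conforming subdivision; by property~5(b) the pairwise geodesic distance between them would be at least twice their length, forcing them to be far apart inside $\Lambda(u,v,w)$; but the shape of $\Lambda(u,v,w)$---a sliver pinched between two bisectors sharing the endpoint where the triple was born in a bisector event---cannot contain two well-separated transparent edges of comparable size that both admit all three wavelets as adjacent, contradicting the uniqueness of the bisector-vertex configuration.

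The main obstacle I anticipate is exactly this last step: making rigorous the claim that the sliver $\Lambda(u,v,w)$ admits at most $O(1)$ pairwise well-separated transparent edges as ``triple-carriers.'' The cleanest route is probably to attach each such $g$ to the unique bisector vertex of $(u,v,w)$ (the intersection of $B(u,v)$ and $B(v,w)$) and show, using property~5(b) and the fact that $u,v,w$ lie outside $\calU(g)$, that the well-covering regions of any two such $g$'s must either nest or be far apart along $\Lambda$; standard packing arguments in a quadtree then cap the count. Combined with Observation~\ref{obser:triple} giving $O(n)$ distinct triples in total, this yields $\sum_{(e,g)\in \Pi}|W(e,g)|=O(n)$, completing the proof of Lemma~\ref{lem:space}.
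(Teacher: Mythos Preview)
Your proposal has a genuine gap: it never uses the one hypothesis that actually makes the lemma true, namely the temporal constraint built into $\Pi$. By the fourth step of the snapshot construction, every pair $(e,g)\in\Pi$ satisfies ``$e$ is already processed, $g$ is not yet processed'' at the snapshot moment. Without this constraint the statement is simply false: as a wavefront sweeps outward through $\calF$, a fixed adjacent triple $(u,v,w)$ can persist across arbitrarily many transparent edges (imagine the wavelet of $v$, bounded by $B(u,v)$ and $B(v,w)$, traversing a long row of cells in $\calS'$). So your sliver $\Lambda(u,v,w)$ really can be crossed by unboundedly many transparent edges $g$, each of which legitimately carries the triple in some $W(e,g)$; there is no purely geometric packing argument that caps the count. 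The ``well-separated edges of comparable size cannot both sit in the sliver'' step you flag as an obstacle is not just hard to formalize---it is not true in the generality you state it.

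The paper's proof exploits exactly the processed/unprocessed dichotomy. After discarding the $O(1)$ special pairs where some $u,v,w$ lies inside $\calU(e)\cup\calU(g)$, one picks the pair $(e_1,g_1)\in\Pi'$ whose $e_1$ is hit first along $B(u,v)$ from $v$, and then any other pair $(e_2,g_2)\in\Pi'$ with $e_2$ outside $\calU(g_1)\cup\calU(e_1)$. One shows $B(u,v)$ meets $e_1,g_1,e_2$ in that order, and then the well-covering bound gives $|\overline{ab}|\ge 2|g_1|$ for the segment from $g_1$ to $e_2$ along $\overline{vb}$. A short computation yields $covertime(g_1)<covertime(e_2)$, contradicting that $e_2$ is processed while $g_1$ is not. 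The whole argument is a time-ordering contradiction, not a packing bound; the well-covering property~5(b) is used once, to compare distances, rather than to separate edges in a quadtree.
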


Before proving Lemma~\ref{lem:gentriple}, we prove Lemma~\ref{lem:space} with the help of Lemma~\ref{lem:gentriple}. Indeed, there are $O(n)$ distinct adjacent-generator-triples in the entire algorithm by Observation~\ref{obser:triple}. Now that each such triple can only appear in a constant number of wavefronts $W(e,g)$ for all $(e,g)\in\Pi$, the total number of generators in all wavefronts $W(e,g)$ for all $(e,g)\in\Pi$ is bounded by $O(n)$. This leads to Lemma~\ref{lem:space}.

\paragraph{Proving Lemma~\ref{lem:gentriple}.}
We prove Lemma~\ref{lem:gentriple} in the rest of this subsection.
Assume to the contrary that an adjacent-generator-triple $(u,v,w)$ appears in the wavefront $W(e,g)$ for more than $O(1)$ pairs $(e,g)$ of $\Pi$ \footnote{By ``more than $O(1)$ pairs'', we intend to say ``more than $c$ pairs for a constant $c$ to be fixed later''. To simplify the discussion, we use ``more than $O(1)$ pairs'' instead with the understanding that such a constant $c$ can be fixed.}; let $\Pi'$ denote the set of all such pairs.

Consider a pair $(e,g)\in \Pi'$. Recall that $W(e,g)$ is obtained by propagating $W(e)$ from $e$ to $g$ through the well-covering region $\calU(g)$ if $e\in input(g)$ and through $\calU(e)$ otherwise. If one of $u$, $v$, and $w$ is in $\calU(g)\cup \calU(e)$, then we call $(e,g)$ a {\em special pair}. 
The properties of the subdivision $\calS'$ guarantee that each obstacle vertex is in $\calU(f)$ for at most $O(1)$ edges $f$ of $\calS'$~\cite{ref:HershbergerAn99}, and thus $\Pi'$ has at most $O(1)$ special pairs. We remove all special pairs from $\Pi'$, after which $\Pi'$ still has more than $O(1)$ pairs and for each pair $(e,g)\in \Pi'$ the three generators $u$, $v$, and $w$ are all outside $\calU(g)\cup \calU(e)$.
Therefore, $u$, $v$, and $w$ are also generators in $W(e)$ in this order (although they may not be adjacent in $W(e)$). Because $(u,v,w)$ is an adjacent-generator-triple in $W(e,g)$, $B(u,v)$ (resp., $B(v,w)$)  intersects $g$. Also, note that $u$, $v$, and $w$ are on the same side of the supporting line of $e$; since they are generators of both $W(e)$ and $W(e,g)$, the bisector $B(u,v)$ (resp., $B(v,w)$) intersects $e$ as well (although the intersection may not appear in $W(e)$, i.e., the intersection may be covered by a different generator in $W(e)$).
Further, $v$ is closer to the intersection of $B(u,v)$ (resp., $B(v,w)$) with $e$ that with $g$.
Let $(e_1,g_1)$ be the pair in $\Pi'$ such that the intersection of $B(u,v)$ with its first element is closest to $v$ among the intersections of $B(u,v)$ with the first elements of all pairs of $\Pi'$.

By the property 5(a) of $\calS'$, the size of $\calU(g_1)\cup \calU(e_1)$ is $O(1)$.
Because $|\Pi'|$ is not $O(1)$, $\Pi'$ must have a pair, denoted by $(e_2,g_2)$, such that $e_2$ is outside $\calU(g_1)\cup \calU(e_1)$. Recall that $W(e_1,g_1)$ is obtained by propagating $W(e_1)$ from $e_1$ to $g_1$ inside $\calU(g_1)$ or $\calU(e_1)$. Hence, if we move along the bisector $B(u,v)$ from its intersection with $e_1$ to its intersection with $g_1$, all encountered edges of $\calS'$ are in $\calU(g_1)\cup \calU(e_1)$. Therefore, by the definitions of $e_1$ and $e_2$, $B(u,v)$ intersects $e_1$, $g_1$, $e_2$, and $g_2$ in this order following their distances from $v$ (e.g., see Fig.~\ref{fig:bisectors}).

By definition (i.e., the fourth step of our snapshot construction algorithm), $e_2$ has been processed but $g_1$ has not. According to the wavefront expansion algorithm, $covertime(e_2)\leq covertime(g_1)$. In the following we will obtain $covertime(e_2)>covertime(g_1)$, which leads to a contradiction.

Let $b$ be the intersection between the bisector $B(u,v)$ and $g_2$ (e.g., see Fig.~\ref{fig:bisectors}). Since $u$ and $v$ are two adjacent generators in $W(e_2,g_2)$, $\overline{vb}$ is in the free space $\calF$.  As $W(e_2,g_2)$ is obtained from $W(e_2)$ and $v$ is also a generator in $W(e_2)$, $\overline{vb}$ must intersect $e_2$, say, at a point $q$ (e.g., see Fig.~\ref{fig:bisectors}).
Further, we have the following observation.

\begin{observation}\label{obser:intersect}
$\overline{vb}$ intersects $g_1$ (e.g., see Fig.~\ref{fig:bisectors}).
\end{observation}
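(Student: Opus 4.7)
My plan is to prove the observation by a Jordan-curve argument in the free space. The text already argues that $\overline{vb}\subset\calF$ because $u,v$ are generators of $W(e_2,g_2)$ and $b$ separates their wavelets on $e_2$. By an analogous argument applied to $W(g_1)$---using that $(u,v,w)$ is also an adjacent-generator-triple there, which follows by tracing the propagation-merge sequence that defines $W(g_1)$ back to a wavefront in which $u,v,w$ are adjacent---I would first derive that $\overline{vb_1}\subset\calF$, where $b_1:=B(u,v)\cap g_1$, and observe $|vb_1|<|vb|$ from the assumed ordering of intersections along $B(u,v)$.

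I would then form the simple closed curve $\Gamma:=\overline{vb_1}\cup\alpha\cup\overline{bv}$, where $\alpha$ is the arc of $B(u,v)$ from $b_1$ to $b$. Each piece of $\Gamma$ lies in $\calF$: the two segments by the previous step, and $\alpha$ because as the $v$-wavelet advances from time $d(s,v)+|vb_1|$ to time $d(s,v)+|vb|$ its $u$-side boundary sweeps out $\alpha$, forcing every point of $\alpha$ to be joined to $v$ by a straight free-space segment. Since $\overline{vb_1}$ and $\overline{vb}$ meet only at $v$, $\Gamma$ is simple and bounds a region $D\subset\calF$.

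Finally I would conclude as follows. The edge $g_1$ contains $b_1\in\Gamma$, so locally $g_1$ enters $D$. Using the well-covering inequality $d(e,f)\geq 2\max\{|e|,|f|\}$ of $\calS'$, together with the fact from the main argument that $e_2$ lies outside $\calU(g_1)\cup\calU(e)$, I would show that both endpoints of $g_1$ lie outside $D$, so $g_1$ must cross $\Gamma$ a second time. Since $B(u,v)$ meets the segment $g_1$ only at $b_1$, and $\overline{vb_1}$ meets $g_1$ only at $b_1$, the second crossing is forced to land on $\overline{vb}$, which is the claim. The main obstacle I foresee is the metric bookkeeping needed to push the endpoints of $g_1$ out of $D$; I expect to bound the diameter of $D$ in terms of $|vb_1|$ and $|vb|$ and then compare it against the separation between $g_1$ and $e_2$ guaranteed by the well-covering property of $\calS'$.
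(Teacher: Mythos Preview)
Your Jordan-curve plan has a genuine gap exactly where you flag it: pushing both endpoints of $g_1$ outside the region $D$ bounded by $\Gamma=\overline{vb_1}\cup\alpha\cup\overline{bv}$. The well-covering inequality $d(e,f)\geq 2\max\{|e|,|f|\}$ only controls distances from $g_1$ to edges on $\partial\calU(g_1)$, not to $e_2$ itself, and it gives no direct handle on the diameter of $D$ (which is governed by $|\overline{vb}|$, a quantity unrelated to $|g_1|$). There is no reason the short segment $g_1$ cannot sit entirely inside the ``wedge'' $D$; nothing in the setup prevents this. Several auxiliary claims are also unjustified: that $g_1$ actually enters (rather than grazes) $D$ at $b_1$, that $B(u,v)$ meets $g_1$ only once, and that $(u,v,w)$ is adjacent in $W(g_1)$---the latter is not even well-posed, since $g_1$ is unprocessed at the snapshot time and the merged wavefront $W(g_1)$ may never contain this triple. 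What you do have is adjacency in $W(e_1,g_1)$.

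The paper avoids all of this by using the third generator $w$, which your argument discards. Because $(u,v,w)$ is adjacent in $W(e_1,g_1)$, the bisectors $B(u,v)$ and $B(v,w)$ cut out a subsegment $\overline{qq'}\subseteq g_1$ that is claimed by $v$; likewise they cut out $\overline{bb'}\subseteq e_2$ claimed by $v$ in $W(e_2)$. Since $v$'s wavelet expands radially and reaches $\overline{qq'}$ before $\overline{bb'}$ (the ordering of intersections along $B(u,v)$ gives $|\overline{vq}|<|\overline{vb}|$), every straight segment $\overline{vp}$ with $p\in\overline{bb'}$ must pass through the ``window'' $\overline{qq'}$. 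Taking $p=b$ gives $\overline{vb}\cap g_1\neq\emptyset$ immediately. This is a two-line argument once you bring $w$ back in; the single-bisector Jordan-curve route is the wrong decomposition here.
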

\begin{proof}
Since $(u,v,w)$ is an adjacent-generator-triple in both wavefronts $W(e_1,g_1)$ and $W(e_2,g_2)$, each of the bisectors $B(u,v)$ and $B(v,w)$ intersects both $g_1$ and $g_2$. Let $z_1$ and $z_2$ be the intersections of $g_1$ with $B(u,v)$ and $B(v,w)$, respectively. Let $b'$ be the intersection of $g_2$ and $B(v,w)$. Then, $\overline{z_1z_2}$, which is a subsegment of $g_1$, is {\em claimed} by $v$ in $W(e_1,g_1)$ (i.e., among all wavelets of $W(e_1,g_1)$, the wavelet at $v$ reaches $p$ the earliest, for all points $p\in \overline{z_1z_2}$). Similarly, $\overline{bb'}$, which is a subsegment of $g_2$, is claimed by $v$ in $W(e_2,g_2)$. Recall that $v$ is closer to $z_1$ than to $b$. Hence, for any point $p\in \overline{bb'}$, $\overline{pv}$ must intersect $\overline{z_1z_2}$. This leads to the observation, for $\overline{z_1z_2}\subseteq g_2$ and $b\in \overline{bb'}$.
\end{proof}


In light of the above observation, let $z$ be an intersection point of $\overline{vb}$ and $g_1$ (e.g., see Fig.~\ref{fig:bisectors}). Using the properties of the well-covering regions of $\calS'$, we have the following observation.

\begin{observation}\label{obser:length}
$|\overline{qz}|\geq 2\cdot |g_1|$.
\end{observation}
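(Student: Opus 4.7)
The plan is to exploit the well-covering property 5(b) of $\calS'$, which guarantees that every edge $f$ on $\partial\calU(g_1)$ is at geodesic distance at least $2\cdot\max\{|g_1|,|f|\}$ from $g_1$. The strategy is to find a point on the segment $\overline{ab}$ that lies on $\partial\calU(g_1)$, and then bound the Euclidean length of the relevant subsegment by the geodesic distance across the well-covering region.

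First I would argue that the segment $\overline{ab}$ lies entirely in the free space $\calF$. This is immediate because $\overline{ab}$ is a subsegment of $\overline{vb}$, and $\overline{vb}$ is in $\calF$ (it is the straight-line portion of the wavelet from the generator $v$ to the point $b\in B(u,v)\cap e_2$, which the observation preceding the lemma already established).

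Next I would locate a crossing of $\partial\calU(g_1)$ on $\overline{ab}$. Since $a\in g_1\subseteq\calU(g_1)$ and $b\in e_2$ with $e_2$ lying outside $\calU(g_1)$ by our choice of the pair $(e_2,g_2)$, the segment $\overline{ab}$ must meet $\partial\calU(g_1)$ at some point $p$. This point $p$ sits on some transparent or opaque edge $f$ of $\partial\calU(g_1)$.

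Finally I would derive the length bound. Because $\overline{ap}\subseteq\overline{ab}\subseteq\calF$, its Euclidean length equals the geodesic distance between its endpoints, so
\[
|\overline{ap}| \;=\; d(a,p) \;\geq\; d(g_1,f) \;\geq\; 2\cdot\max\{|g_1|,|f|\} \;\geq\; 2\cdot|g_1|,
\]
where the second inequality is precisely property 5(b) applied to the edge $f\in\partial\calU(g_1)$. Since $p$ lies on the segment $\overline{ab}$, we have $|\overline{ab}|\geq|\overline{ap}|\geq 2|g_1|$, which is the desired inequality. The only conceptual hurdle is confirming that $e_2$ truly lies outside $\calU(g_1)$ so that $\overline{ab}$ must puncture $\partial\calU(g_1)$; this has already been arranged at the start of the proof of Lemma~\ref{lem:gentriple} when $(e_2,g_2)$ was selected.
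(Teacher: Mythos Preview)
Your proof is correct and follows essentially the same approach as the paper: use that $a\in g_1\subseteq\calU(g_1)$ while $b\in e_2$ lies outside $\calU(g_1)$, so the free-space segment $\overline{ab}$ must cross $\partial\calU(g_1)$ at some point $p$, and then apply property~5(b) to conclude $|\overline{ap}|\geq 2|g_1|$. Your version is slightly more explicit (naming the edge $f$ and noting that Euclidean length equals geodesic distance along a free-space segment), but the argument is identical in substance.
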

\begin{proof}
Since $q\in e_2$ is outside the well-covering region $\calU(g_1)$ of $g_1$, $z\in g_1$, and $\overline{qz}$ is in the free space $\calF$, $\overline{qz}$ must cross the boundary of $\calU(g_1)$, say, at a point $p$. By the property 5(b) of the conforming subdivision $\calS'$, $|\overline{pz}|\geq 2\cdot |g_1|$. As $|\overline{qz}|\geq |\overline{pz}|$, the observation follows.
\end{proof}

\begin{figure}[t]
\begin{minipage}[t]{0.49\textwidth}
\begin{center}
\includegraphics[height=2.1in]{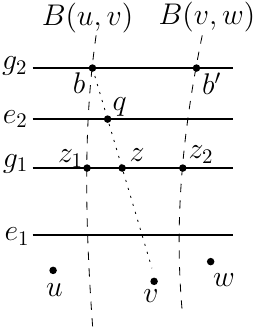}
\caption{\footnotesize Illustrating the two bisectors $B(u,v)$ and $B(v,w)$ as well as the four edges $e_1$, $g_1$, $e_2$, and $g_2$ (each of these edges may also be vertical).}
\label{fig:bisectors}
\end{center}
\end{minipage}
\hspace{0.02in}
\begin{minipage}[t]{0.49\textwidth}
\begin{center}
\includegraphics[height=1.6in]{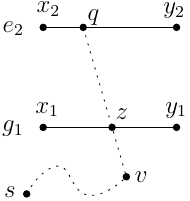}
\caption{\footnotesize Illustrating the proof of Lemma~\ref{lem:covertime}.}
\label{fig:covertime}
\end{center}
\end{minipage}
\vspace{-0.15in}
\end{figure}

The following lemma, which is a consequence of Observation~\ref{obser:length}, leads to a contradiction, for $covertime(g_1)\geq covertime(e_2)$, and thus Lemma~\ref{lem:gentriple} is proved.

\begin{lemma}\label{lem:covertime}
$covertime(g_1)<covertime(e_2)$.
\end{lemma}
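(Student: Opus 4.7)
The plan is to sandwich $covertime(e_2)$ and $covertime(g_1)$ between expressions in $d(s,a)$ that differ by a positive multiple of $|g_1|$, using Observation~\ref{obser:length} to create the separation.

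For the upper bound on $covertime(g_1)$, note that $a$ lies on the transparent segment $g_1$, so if $v_1,v_2$ are its endpoints then $|\overline{av_1}|+|\overline{av_2}|=|g_1|$, and the nearer of the two is at Euclidean distance at most $|g_1|/2$ from $a$. Because the segment from $a$ to either endpoint stays in the free space, $d(s,v_i)\le d(s,a)+|\overline{av_i}|$, and taking the minimum yields $\td(s,g_1)\le d(s,a)+|g_1|/2$. Hence $covertime(g_1)\le d(s,a)+\tfrac{3}{2}|g_1|$.

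For the lower bound on $covertime(e_2)$, the symmetric estimate applied to $b\in e_2$ (endpoints within Euclidean distance $|e_2|$ of $b$ along the free-space segment $e_2$) gives $\td(s,e_2)\ge d(s,b)-|e_2|$, so $covertime(e_2)\ge d(s,b)$. I would then pin down $d(s,b)$ exactly: since $u$ and $v$ are adjacent generators of $W(e_2)$ and $b=B(u,v)\cap e_2$, the wavelet from $v$ is the one that claims $b$, so $d(s,b)=d(s,v)+|\overline{vb}|$. Because $a$ lies on the segment $\overline{vb}$, the geodesic path $s\to\cdots\to v\to b$ passes through $a$; its prefix to $a$ must itself be geodesic (otherwise $b$ could be reached faster, contradicting the previous equality), yielding $d(s,a)=d(s,v)+|\overline{va}|$. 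Subtracting gives $d(s,b)=d(s,a)+|\overline{ab}|$, and Observation~\ref{obser:length} upgrades this to $covertime(e_2)\ge d(s,a)+2|g_1|$.

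Combining the two bounds, $covertime(e_2)-covertime(g_1)\ge 2|g_1|-\tfrac{3}{2}|g_1|=\tfrac{1}{2}|g_1|>0$, which is the strict inequality the lemma demands. The step I expect to be the most subtle is justifying $d(s,b)=d(s,v)+|\overline{vb}|$ with equality rather than just $\le$: it requires arguing from the claiming structure of $W(e_2)$ at the bisector point $b$ that no other generator's wavelet reaches $b$ before $v$'s, which implicitly leans on the wavefront-correctness guarantees of the HS algorithm; once that identity is in hand, the collinearity of $a$ on $\overline{vb}$ delivers the matching identity for $a$ and the proof reduces to arithmetic.
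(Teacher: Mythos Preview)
Your overall strategy---bounding $covertime(g_1)$ from above and $covertime(e_2)$ from below by quantities differing by $\tfrac12|g_1|$---matches the paper's, and your upper bound $covertime(g_1)\le d(s,a)+\tfrac32|g_1|$ as well as the inequality $covertime(e_2)\ge d(s,b)$ are both fine. The gap is exactly the step you yourself flag as ``most subtle'': the equality $d(s,b)=d(s,v)+|\overline{vb}|$ cannot be extracted from the claiming structure of $W(e_2)$. The wavefront $W(e_2)$ is an \emph{approximate} one-sided wavefront, not the true wavefront; that $v$ claims $b$ in $W(e_2)$ only tells you that among the wavelets of $W(e_2)$, the one from $v$ arrives first. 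The true geodesic distance $d(s,b)$ may well be strictly smaller (e.g., via a path approaching $e_2$ from the other side), so you only have $d(s,b)\le d(s,v)+|\overline{vb}|$, and your chain $covertime(e_2)\ge d(s,b)\ge d(s,a)+|\overline{ab}|$ breaks at the second inequality.

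The paper avoids $d(s,b)$ altogether and instead compares $v$'s wavelet to the \emph{artificial wavelets} seeded at the endpoints $x_2,y_2$ of $e_2$: since $v$ claims $b$ in $W(e_2)$, neither artificial wavelet beat it there, which forces $\min\{d(s,x_2),d(s,y_2)\}+|e_2|\ge d(s,v)+|\overline{vb}|$, i.e.\ $covertime(e_2)\ge d(s,v)+|\overline{vb}|$ directly. Combining this with $|\overline{vb}|=|\overline{va}|+|\overline{ab}|$ and Observation~\ref{obser:length}, and with the upper bound $covertime(g_1)\le d(s,v)+|\overline{va}|+\tfrac32|g_1|$ (which follows from your own bound via $d(s,a)\le d(s,v)+|\overline{va}|$), the comparison goes through. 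So your framework is salvageable, but the crucial lower bound must be obtained via the artificial-wavelet mechanism rather than via the true geodesic distance $d(s,b)$.
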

\begin{proof}
Let $x_1$ and $y_1$ be the two endpoints of $g_1$, respectively. Let $x_2$ and $y_2$ be the two endpoints of $e_2$, respectively. Refer to Fig.~\ref{fig:covertime}. As $\overline{zv}\subseteq \overline{qv}$ is in the free space $\calF$, we have the following for $covertime(g_1)$:
\begin{equation*}
\begin{split}
covertime(g_1) & =\min\{d(s,x_1),d(s,y_1)\} + |g_1| \\
& \leq d(s,v) + |\overline{vz}| + |g_1|/2 + |g_1|\\
& =d(s,v) + |\overline{vz}| + 3|g_1|/2.
\end{split}
\end{equation*}

On the other hand, because $v$ claims $b$ in $W(e_2,g_2)$ (as explained in the proof of Observation~\ref{obser:intersect}), $v$ is also a generator in $W(e_2)$, and $\overline{bv}$ intersects $e_2$ at $q$, $v$ must claim $q$ in $W(e_2)$.
Hence, $\min\{d(s,x_2),d(s,y_2)\}+|e_2|\geq d(s,v)+|\overline{vq}|$ must hold, since otherwise the artificial wavelets at the endpoints of $e_2$ would have prevented $v$ from claiming $q$~\cite{ref:HershbergerAn99}.
Therefore, we have the following for $covertime(e_2)$:
\begin{equation*}
\begin{split}
covertime(e_2) & =\min\{d(s,x_2),d(s,y_2)\} + |e_2| \\
& \geq d(s,v) + |\overline{vq}|   \\
& = d(s,v) + |\overline{vz}| + |\overline{zq}| \\
& \geq d(s,v) + |\overline{vz}| + 2\cdot |g_1|.
\end{split}
\end{equation*}
The last inequality is due to Observation~\ref{obser:length}. As $|g_1|>0$, we obtain $covertime(g_1)<covertime(e_2)$.
\end{proof}

\subsection{The SPM-construction step}
\label{sublem:spm}

The above shows that the wavefront expansion step of the HS algorithm can be implemented in $O(n\log n)$ time and $O(n)$ space. As discussed in Section~\ref{sec:overview}, the algorithm computes geodesic distances from $s$ to the endpoints of all transparent edges of $\calS'$. As each obstacle vertex is incident to a transparent edge of $\calS'$, geodesic distances from $s$ to all obstacle vertices are known. In this subsection, we discuss our modification to the SPM-construction step of the HS algorithm so that the shortest path map $\spm(s)$ can be constructed in  $O(n\log n)$ time and $O(n)$ space.
As discussed before, the key is to compute the vertices of $\spm(s)$ in each cell $c$ of $\calS'$.

Consider a cell $c$ of $\calS'$. To compute the vertices of $\spm(s)$ in $c$, there are two major steps. First, partition $c$ into active and inactive regions. Second, compute the vertices of $\spm(s)$ in each active region. The original HS algorithm achieves this using the wavefronts $W(e)$ of the transparent edges $e$ on the boundary of $c$ as well as the marked generators for $c$, which are marked during the wavefront expansion step. As our new algorithm does not maintain the wavefronts $W(e)$ anymore due to the space-reset, we cannot use the same algorithm as before. We instead propose the following approach.

\paragraph{The first step.}
The first step is to compute the active regions of $c$. Recall that $c$ is partitioned into active and inactive regions by bisectors defined by an unmarked generator $u$ and a marked generator $v$. Consider such a bisector $B(u,v)$. Observe that $u$ and $v$ must be from the same wavefront $W(e)$ of an edge $e$ of $\partial c$. Indeed, assume to the contrary that $u$ is from $W(e_u)$ and $v$ is from $W(e_v)$ of two different edges $e_u$ and $e_v$ of $\partial c$. Then, because $u$ and $v$ are adjacent generators along $\partial c$, the wavelet of $u$ in $W(e_u)$ must claim an endpoint of $e_u$ and the wavelet of $v$ in $W(e_v)$ must claim an endpoint of $e_v$.
Hence, both $u$ and $v$ are marked by Rule~2(a) of the generator marking rules~\cite{ref:HershbergerAn99}. But this incurs a contradiction as $u$ is unmarked.
Therefore, $u$ and $v$ must be from the same wavefront $W(e)$ and they are actually neighboring in $W(e)$.

Based on the above observation, in addition to the marking rules of the original HS algorithm (see Section~4.1.2 of~\cite{ref:HershbergerAn99}), we add the following new rule: If a generator $v$ in a wavefront $W(e)$ is marked for a cell by the original rules during the wavefront expansion step of the HS algorithm, we also mark both neighboring generators of $v$ in $W(e)$.
We call the generators marked by the new rule {\em newly-marked} generators, while the generators marked by the original rules are called {\em originally-marked} generators. If a generator is both originally-marked and newly-marked, we consider it originally-marked. Since the total number of originally-marked generators is $O(n)$~\cite{ref:HershbergerAn99}, the total number of newly-marked generators is $O(n)$.
Also the new mark rules can be easily incorporated into the wavefront expansion step without affecting its running time asymptotically.

Now consider the bisector $B(u,v)$ discussed in the above observation. The generator $u$ must be a newly-marked generator. This suggests the following method to compute the active regions of $c$. We collect all marked generators (including both newly-marked and originally-marked) for the transparent edges of $\partial c$ and order them around the boundary of $c$. For each pair of adjacent generators $u$ and $v$ in the wavefront $W(e)$ of a transparent edge $e$ such that $u$ is newly-marked and $v$ is originally marked, we compute their bisector in $c$.
These bisectors are disjoint and partition $c$ into regions, and the regions claimed by marked generators only are active regions~\cite{ref:HershbergerAn99}. The time for computing all active regions is bounded by $O(m\log m)$, where $m$ is the number of all marked generators for $c$. Therefore, the total time of this step for all cells of $\calS'$ is bounded by $O(n\log n)$ since the total number of marked generators is $O(n)$.

\paragraph{The second step.} The second step is to compute the vertices of $\spm(s)$ in each active region $R$ of $c$. For each transparent edge $e$ of $\partial R$, which is on $\partial c$, we need to restore its wavefront $W(e)$, because it is not maintained by our new algorithm. To this end, the algorithm in the above first step determines a list $L(e)$ of originally-marked generators that claim $e$. It turns out that $L(e)$ is exactly $W(e)$, as shown below.

\begin{observation}
$L(e)=W(e)$.
\end{observation}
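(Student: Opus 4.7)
The plan is to prove the two inclusions separately. The direction $L(e)\subseteq W(e)$ is immediate: by construction each element of $L(e)$ is a generator that claims a nonempty sub-portion of the transparent edge $e$, and by the very definition of the wavefront, any such generator belongs to $W(e)$.

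For the substantive direction $W(e)\subseteq L(e)$, I would fix an arbitrary $v\in W(e)$ and show that $v$ is originally-marked; its membership in $L(e)$ then follows since $v$ plainly claims a portion of $e$. Let $[a,b]$ denote the sub-portion of $e$ claimed by $v$. I would split into two cases according to where $a$ and $b$ lie. In case (i), one of $a$, $b$ coincides with an endpoint of $e$; then $v$'s wavelet claims that endpoint and Rule~2(a) of the HS marking rules~\cite{ref:HershbergerAn99} marks $v$ originally. In case (ii), both $a$ and $b$ are interior to $e$, so the bisectors $B(u,v)$ and $B(v,w)$ of $v$ with its two neighbors $u$, $w$ in $W(e)$ cross $e$ and enter the cell $c$ incident to $e$ on the side into which $W(e)$ propagates; these bisectors trigger the HS marking rules for bisector events involving $v$ inside $c$.

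The hard part will be case (ii): one must precisely identify the clause of the HS marking rules that catches a bisector entering a cell through a transparent edge, even before the bisector terminates at a true vertex of $\spm(s)$. I expect this to require a short case analysis on how $B(u,v)$ and $B(v,w)$ interact with the other edges of $\partial c$ (whether they meet inside $c$, hit an obstacle edge, or leave $c$ through another transparent edge), combined with the invariant that the marking rules of~\cite{ref:HershbergerAn99} are engineered so that every generator contributing a wavelet to $W(e,g)$ for some $g\in output(e)$ is originally-marked for the cells through which that propagation passes. Once this invariant is pinned down, the reverse inclusion follows and the equality $L(e)=W(e)$ is established.
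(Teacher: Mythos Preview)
Your inclusion $L(e)\subseteq W(e)$ is fine, but the reverse inclusion has a genuine gap: the ``invariant'' you hope to invoke in case~(ii) is simply false. It is \emph{not} true that every generator appearing in some wavefront $W(e)$ (or $W(e,g)$) is originally-marked for the cells through which the propagation passes. Indeed, the very existence of \emph{inactive} regions in the HS construction---regions claimed by unmarked generators---shows that wavefronts routinely contain generators that are not originally-marked. A bisector $B(u,v)$ can enter $c$ through $e$, traverse $c$, and exit through another transparent edge without triggering any marking rule for $v$; the HS rules mark generators involved in bisector \emph{events} (collisions, obstacle hits, endpoint claims), not every generator whose bisector merely crosses a cell. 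So your case~(ii) cannot be closed along the lines you sketch.

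What you are missing is the one hypothesis that makes the observation nearly immediate: the transparent edge $e$ under consideration is a fragment on the boundary of an \emph{active} region $R$, and by construction active regions are exactly those claimed only by originally-marked generators. Since $e\subseteq R$ and every $v\in W(e)$ claims at least one point of $e$, every such $v$ is originally-marked, hence lies in $L(e)$. The paper's proof is just these two sentences; no case analysis on the HS marking rules is needed.
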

\begin{proof}
Since each generator of $L(e)$ is marked for $e$ and $c$, $L(e)$ is a subset of $W(e)$.
On the other hand, because $R$ is an active region, $R$ is claimed by originally-marked generators only~\cite{ref:HershbergerAn99}. Since each generator of $W(e)$ claims at least one point of $e$ and $e\subseteq R$, all generators are originally-marked. Therefore, all generators of $W(e)$ are in $L(e)$.
The observation thus follows.
\end{proof}

By the above observation, the wavefronts $W(e)$ for all transparent edges of all active regions of $c$ can be restored once all active regions of $c$ are computed in the first step. Subsequently, using the same method as the original HS algorithm, the vertices of $\spm(s)$ in $c$ can be computed in $O(m\log m)$ time, where $m$ is the total number of marked generators of $c$. As remarked in Section~\ref{sec:overview}, the space is $O(m)$. As the total number of marked generators for all cells of $\calS'$ is $O(n)$, the overall time for computing all vertices of $\spm(s)$ is $O(n\log n)$ and the space is $O(n)$. Finally, the edges of $\spm(s)$ can be computed separately and then $\spm(s)$ can be assembled by a plane sweep algorithm (see Lemma~4.13~\cite{ref:HershbergerAn99} for details); this step takes $O(n\log n)$ time and $O(n)$ space.

The following theorem summarizes our main result.

\begin{theorem}
Given a source point $s$ and a set of pairwise disjoint polygonal obstacles of $n$ vertices in the plane, the shortest path map of $s$ can be constructed in $O(n\log n)$ time and $O(n)$ space.
\end{theorem}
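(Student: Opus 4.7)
The plan is to simply combine the two modifications developed in Sections~\ref{subsec:expansion} and~\ref{sublem:spm}; all the technical work has been done by the preceding lemmas, so the theorem follows essentially by bookkeeping the two step-level results.

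First I would invoke the modified wavefront expansion step. The $O(n)$ wavefront propagation operations are partitioned into $O(\log n)$ phases of at most $n/\log n$ operations each. Within a phase, the persistent-tree machinery of the HS algorithm introduces $O(\log n)$ extra space per operation, contributing $O(n)$ total space per phase. At each phase boundary, the snapshot construction algorithm is executed; by Lemma~\ref{lem:space} the resulting snapshot has size $O(n)$, so discarding everything else resets the working storage to $O(n)$. The lemma showing that the snapshot retains all information the subsequent algorithm can query guarantees that correctness is preserved across phase boundaries. Each snapshot construction takes $O(n)$ time and there are $O(\log n)$ of them, so the extra work is $O(n\log n)$, and the step as a whole runs in $O(n\log n)$ time and $O(n)$ space. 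It produces $d(s,v)$ for every endpoint $v$ of every transparent edge of $\calS'$, and hence for every obstacle vertex by the property that each obstacle vertex is incident to a transparent edge of $\calS'$.

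Second I would run the modified SPM-construction step on each cell $c$ of $\calS'$. The active regions of $c$ are computed by collecting both originally-marked and newly-marked generators along $\partial c$: the newly-marked generators supply exactly the unmarked-vs-marked neighbors needed to describe every bisector that partitions $c$ into active and inactive regions. Then, for each active region $R$ and each transparent edge $e$ of $\partial R$, the identity $L(e)=W(e)$ established in the observation of Section~\ref{sublem:spm} recovers the wavefront $W(e)$ from the marked-generator list alone, without any need to have stored $W(e)$. The HS merging sub-procedure for computing $\spm(s)$-vertices inside $R$ applies verbatim, and the final plane-sweep assembly of $\spm(s)$ is inherited from the HS algorithm. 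Since the total number of marked generators is $O(n)$, summing the per-cell bound $O(m\log m)$ gives $O(n\log n)$ time and $O(n)$ space overall.

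The only real obstacle has already been resolved: Lemma~\ref{lem:space}, which bounds $\sum_{(e,g)\in \Pi}|W(e,g)|$ by $O(n)$ through Observation~\ref{obser:triple} on adjacent-generator-triples together with Lemma~\ref{lem:gentriple}, itself proved via a $covertime$ contradiction that draws on the well-covered property of $\calS'$. Once that ingredient is in place, combining the $O(n\log n)$-time, $O(n)$-space bounds for the wavefront expansion step and the SPM-construction step immediately yields the theorem.
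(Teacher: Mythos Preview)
Your proposal is correct and mirrors the paper's own structure: the theorem is stated as a summary of the two modified steps, and you have accurately identified that the wavefront expansion step (phases, snapshots, Lemma~\ref{lem:space} for the $O(n)$ snapshot bound, and the sufficiency lemma) together with the SPM-construction step (newly-marked generators, the observation $L(e)=W(e)$, and the inherited HS merging and plane-sweep) yield the $O(n\log n)$ time and $O(n)$ space bounds. No additional ideas are required beyond this bookkeeping.
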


By building a point location data structure on $\spm(s)$ in $O(n)$ time~\cite{ref:EdelsbrunnerOp86,ref:KirkpatrickOp83}, given any query point $t$, the geodesic distance $d(s,t)$ can be computed in $O(\log n)$ time and a shortest path $\pi(s,t)$ can be produced in additional time linear in the number of edges of the path.

\section{Concluding remarks}
\label{sec:con}

In this paper, by modifying the HS algorithm~\cite{ref:HershbergerAn99} we solve the Euclidean shortest path problem among polygonal obstacles in the plane in $O(n\log n)$ time and $O(n)$ space, reducing the space complexity of the HS algorithm by a logarithmic factor and settling the longstanding open question of Hershberger and Suri~\cite{ref:HershbergerAn99}. The new algorithm is now optimal in both time and space. Our main technique is to divide the HS algorithm into phases and perform a space-reset procedure after each phase by constructing a linear-space snapshot that maintains sufficient information for the subsequent algorithm.

Like the original HS algorithm, our new algorithm can also handle some extensions of the problem. For example, if the source is not a point but a more complex object like a line segment or a disk, then as discussed in~\cite{ref:HershbergerAn99}, except for initialization and propagating the initial wavelets, the rest of the algorithm does not change. As such, our new algorithm can solve this case in $O(n\log n)$ time and $O(n)$ space. For the multiple source case where there are $m$ source points and the goal is essentially to compute the {\em geodesic Voronoi diagram} of these points, i.e., partition the free space into regions so that all points in the same region have the same nearest source point, our algorithm can solve this case in $O((n+m)\log (n+m))$ time and $O(n+m)$ space. Also, as discussed in~\cite{ref:HershbergerAn99}, the algorithm still works if each source point has an initial ``delay''. Note that the geodesic Voronoi diagram problem in simple polygons has been particularly considered; other than some earlier work~\cite{ref:AronovOn89,ref:PapadopoulouA98}, the problem has recently received increased attention~\cite{ref:LiuA20,ref:OhOp19,ref:OhVo20}. Oh~\cite{ref:OhOp19} finally proposed an algorithm of $O(n+m\log m)$ time, which is optimal.

Our technique might be able to find applications elsewhere. For example, Hershberger et al.~\cite{ref:HershbergerA13} considered the shortest path problem among curved obstacles in the plane. They gave an algorithm to construct a data structure for a source point $s$ so that each shortest path query can be answered in $O(\log n)$ time. Their algorithm runs in $O(n\log n)$ time and $O(n\log n)$ space, plus $O(n)$ calls to a {\em bisector oracle}, which is to compute the intersection of two bisectors defined by pairs of curved obstacle boundary segments. Their data structure is not a shortest path map, but consists of all wavefronts (represented by persistent binary trees) produced during the algorithm after each event; the size of the data structure is $O(n\log n)$. If we are looking for a shortest path from $s$ to a single point $t$, then it might be possible that our technique can be applied to reduce the space of their algorithm to $O(n)$ because the high-level scheme of their algorithm is the same as the wavefront expansion step of the HS algorithm for polygonal obstacles. For answering queries, however, the data structure proposed in~\cite{ref:HershbergerA13} is inherently of size $\Omega(n\log n)$ because all wavefronts need to be (implicitly) maintained. To have a data structure of size $O(n)$, one possible way is to use a shortest path map as in the polygonal obstacle case. However, this will require an additional oracle to explicitly compute the bisector of two curved obstacle boundary segments; this oracle was not used by Hershberger et al.~\cite{ref:HershbergerA13} because they wanted to minimize the number of oracles their algorithm relies on.

Another possible application of our technique is on finding shortest paths on the surface of a convex polytope in three dimensions~\cite{ref:MitchellTh87,ref:MountOn84,ref:MountSt87,ref:SharirOn86,ref:SchreiberAn08}. The current best algorithm was given by Schreiber and Sharir~\cite{ref:SchreiberAn08} and their algorithm uses $O(n\log n)$ time and $O(n\log n)$ space. Although many details are different, Schreiber and Sharir~\cite{ref:SchreiberAn08} proposed an oct-tree-like 3-dimensional axis-parallel subdivision of the space, which is similar in spirit to the planar conforming subdivision in~\cite{ref:HershbergerA13}, and used it to guide their continuous Dijkstra algorithm to propagate wavefronts, in a similar manner as the HS algorithm. Schreiber and Sharir~\cite{ref:SchreiberAn08} raised an open question whether the space of their algorithm can be reduced to $O(n)$. It would be interesting to see whether our technique can be adapted to their algorithm.



\footnotesize
 \bibliographystyle{plain}
\bibliography{reference}

\end{document}